\documentclass[11pt, letterpaper]{article}
\usepackage[utf8]{inputenc}
\usepackage[margin=0.7in]{geometry}
\usepackage{apacite}
\usepackage{parskip}
\usepackage{natbib}
\usepackage{url} 
\usepackage[hidelinks]{hyperref}
\usepackage{amsmath,amsfonts,amssymb,amsthm}
\usepackage{bbold,fullpage}

\usepackage{commath}
\usepackage{booktabs}
\usepackage{graphicx}
\usepackage{float} 

\usepackage{enumitem} 

\usepackage{graphicx}      
\usepackage{subcaption}    
\usepackage{lipsum}   
\usepackage[affil-it]{authblk}
\usepackage{listings} 
\usepackage{caption}
\usepackage{setspace}
\newtheorem{definition}{Definition} 

\newtheorem{proposition}{Proposition}
\newtheorem{corollary}{Corollary}

\newtheorem{remark}{Remark}

\usepackage{tikz}
\usetikzlibrary{positioning}
\usetikzlibrary{arrows.meta}

\title{\vspace{-3em} 
\textbf{How Proxy Race Distorts Regression-Based Fairness Audits}\vspace{-0.75em}}
\author[1]{Xi Xin}
\author[2]{Giles Hooker}
\author[1]{Fei Huang\footnote{Correspondence: Xi Xin, xi.xin@unsw.edu.au;  Giles Hooker, ghooker@wharton.upenn.edu;   Fei Huang, feihuang@unsw.edu.au.}}

\affil[1]{UNSW Sydney, 
School of Risk and Actuarial Studies}
\affil[2]{University of Pennsylvania, Wharton School
Department of Statistics and Data Science}

\date{}

\begin{document}
\onehalfspacing
\maketitle
\begin{abstract}
    Proxy-based race inference is increasingly used to conduct fairness assessments when protected-class data are unavailable or legally restricted–most prominently in U.S. fair-lending enforcement, and now explicitly contemplated in emerging insurance regulation, including Colorado’s draft SB21-169 testing framework and New York’s Insurance Circular Letter No. 7. Despite this growing regulatory relevance, little is known about how standard regression-based discrimination analyses behave when race is measured with error through proxies such as Bayesian Improved Surname Geocoding (BISG) or Bayesian Improved First Name and Surname Geocoding (BIFSG). This paper studies the consequences of using proxy-imputed race as a categorical regressor in regression-based fairness assessments. Treating proxy race as a categorical covariate subject to misclassification, we show that proxy-based coefficients become weighted mixtures of true group effects, systematically shrinking estimated disparities toward the majority group–even when overall classification accuracy is high. Empirically, using a linked North Carolina voter–insurance dataset with self-reported race and ZIP-level auto insurance premiums, we demonstrate two mechanisms through which it distorts inference: (i) the intrinsic mixing of group effects implied by misclassification, and (ii) structured errors that vary with ZIP-level racial composition and socioeconomic conditions and remain correlated with pricing residuals after controls. As a result, regression-based disparity estimates can be attenuated or amplified relative to analogous analyses based on self-reported race. Our findings caution against treating proxy race as a plug-in substitute in regulatory testing and highlight design implications for proxy-based audit frameworks in insurance and other high-stakes domains.

\end{abstract}
\textbf{Keywords: } fairness assessment; proxy variables; measurement error; misclassification; regression-based discrimination analysis; insurance regulation.

\section{Introduction}

With the rapid expansion of big data and artificial intelligence (AI), concerns about discrimination against minority groups have intensified across many high-stakes domains. Empirical tests for discrimination, however, are often difficult to conduct, as sensitive attributes such as race or gender are typically unavailable or legally restricted from collection. To address this limitation in the context of racial discrimination, proxy-based approaches to race imputation have emerged, inferring individuals’ racial or ethnic identities from auxiliary information such as individual names and residential location, and enabling fairness assessments when direct race information is unavailable.

Proxy-based approaches to race inference for fairness assessment have been used most prominently in fair-lending settings. Under the Equal Credit Opportunity Act (ECOA) and its implementing Regulation B, creditors are prohibited from discriminating on the basis of race, ethnicity, and other protected characteristics, while generally barred from collecting applicants’ race or ethnicity for non-mortgage credit products\footnote{Mortgage lending represents a notable exception: the Home Mortgage Disclosure Act explicitly authorizes and requires the collection and reporting of applicants’ race and ethnicity.}. This regulatory structure has created an environment in which proxy-inferred sensitive attributes have become a practical tool for assessing potential disparities when direct measures are unavailable, including in a series of prominent auto-lending enforcement actions initiated with the 2013 Ally Bank case and followed by related cases during 2014–2016 (CFPB, \citeyear{cfpb2013ally, cfpb2015honda, cfpb2015fifththird, cfpb2016toyota}). These applications generated substantial controversy, as critics questioned whether proxy-based racial imputations could reliably support actions with substantial legal and financial consequences, prompting sustained academic and policy debate over their appropriate use and limitations \citep{HouseFinancialServices2015CFPB,koren2016feds}.


More recently, fairness concerns have increasingly migrated into the insurance domain, driven in large part by insurers’ expanding use of external consumer data and AI-driven predictive models throughout the insurance lifecycle. In response, insurance regulators have emphasized the need for insurers to implement appropriate testing and validation procedures to identify the potential for unfair discrimination in the decisions and outcomes arising from predictive models and AI systems (NAIC, \citeyear{NAIC2023AIModelBulletin}). The use of proxy-based race inference has also been noted in recent actuarial and industry discussions \citep{AAA2023DataBias, SOAImputingRace2024}. Several emerging insurance regulatory frameworks explicitly contemplate that protected-class status may be determined using data available to insurers or reasonably inferred using accepted statistical methodologies, including proxy-based approaches, when  direct race or other protected-class information is unavailable. Together, these developments create a concrete institutional setting in which proxy-based race inference and regression-based disparity testing are increasingly used to operationalize quantitative fairness assessments in insurance. Appendix~\ref{app:regcontext} provides a detailed summary of these regulatory developments, including Colorado’s draft SB21-169 testing framework and New York’s Insurance Circular Letter No.~7.


Existing research has shown that proxy-based race inference can distort fairness assessments in important ways. Empirical studies using empirical mortgage data find that BISG-based regressions may either overestimate or underestimate racial disparities relative to analyses using self-reported race \citep{baines2014fair, zhang2018assessing}. In a closely related study, \citet{chen2019fairness} provide a formal analysis of proxy-based fairness assessment for group-average outcome disparities, deriving interpretable sufficient conditions under which the commonly used thresholded and weighted estimators over- or under-estimate the true demographic disparity. More recently, \citet{kallus2022assessing} show that when protected-class status is observed only through proxies, disparity measures are generally not point-identifiable, motivating a shift toward partial-identification frameworks that characterize the range of disparities consistent with the available data. Despite this growing literature, comparatively little formal analysis exists for a setting that mirrors common fair-lending regression practices and is increasingly contemplated in insurance regulatory frameworks: regression-based discrimination analyses in which race or ethnicity is measured with error through Bayesian Improved Surname Geocoding (BISG) or Bayesian Improved First Name and Surname Geocoding (BIFSG) proxies and used directly as a categorical regressor. Existing work has largely focused on (i) direct estimators of demographic disparity defined as differences in group-average outcomes, or (ii) proposing alternative estimators or deriving identification bounds when protected-class status is unobserved. In contrast, this paper focuses on this regression-based setting and examines the implications of using proxy-imputed race directly as a categorical regressor in discrimination analyses.

This paper studies how regression-based fairness assessments behave when race is measured with error through BISG or BIFSG proxies. Treating proxy-imputed race as a measurement of true (unobserved) race subject to misclassification, we characterize the resulting error structure via a confusion matrix and show that proxy-based regression coefficients become weighted mixtures of true group effects. We further establish sufficient conditions under which misclassification necessarily shrinks between-group variability in predicted outcomes. As a result, estimated disparities are systematically shrunk toward the majority group--even when overall classification accuracy appears high. We illustrate these mechanisms empirically using a linked North Carolina voter–insurance dataset that combines self-reported race with ZIP-level auto insurance premiums. The empirical analysis highlights two distinct channels through which proxy race distorts inference: (i) an intrinsic mixing of group effects implied by misclassification, and (ii) structured proxy misclassification errors that vary with ZIP-level racial composition and socioeconomic conditions and remain correlated with pricing residuals after controls. Accordingly, regression-based disparity estimates can be either attenuated or amplified relative to analyses based on self-reported race.

More fundamentally, when proxy race is used as a categorical regressor, regression coefficients no longer directly recover true group disparities. Instead, they reflect a confusion-matrix–weighted transformation of those disparities induced by misclassification. In this sense, proxy-based regression audits alter the estimand itself rather than merely introducing classical measurement noise. Our results provide a mechanism-based explanation for why proxy race distorts regression-based discrimination analysis, caution against treating proxy race as a neutral plug-in substitute in regulatory testing, and offer practical implications for the design and interpretation of proxy-based audit frameworks in regulatory and industry settings. 

While we illustrate the mechanism using BISG/BIFSG and insurance pricing, the analysis is not specific to this empirical setting. Any regression-based audit that replaces a categorical protected attribute with a systematically misclassified proxy inherits a confusion-matrix transformation of group effects. The distortion we characterize therefore applies broadly across protected attributes, proxy constructions, and domains in which regression-based fairness testing is employed.

The remainder of the paper is organized as follows. Section~\ref{sec:related} reviews related studies on BISG/BIFSG and examines how their measurement errors affect fairness analysis; Section~\ref{sec:errors-categorical} develops our regression framework with misclassified categorical covariates and analyzes how misclassification mixes group effects and shrinks between-group variation in predicted outcomes; Section~\ref{sec:empirical} presents the empirical analysis using the North Carolina matched data; Section~\ref{sec:policy} discusses policy implications for regulators and industry practitioners; and Section~\ref{sec:conclusion} concludes.

\section{Related Studies} \label{sec:related}

\subsection{BISG and BIFSG Algorithms}

Bayesian Improved Surname Geocoding (BISG) is a widely used method for imputing race or ethnicity when individual-level demographic information is unavailable or legally restricted \citep{elliott2008new,elliott2009using}. It formalizes earlier practices that relied on geocoding-only or surname-only inference by combining the two sources of information within a Bayesian framework \citep{fiscella2006use}. Let $P(r \mid s, g)$ denote the BISG probability that an individual with surname $s$ and geographic area of residence $g$ belongs to racial or ethnic group $r$. Here $r \in \{1,\dots,R\}$ indexes racial or ethnic groups, where $R$ denotes the total number of groups, and $r'$ indexes the summation over all groups. The symbols $s$, $f$, and $g$ denote an individual's surname, first name, and geographic area of residence, respectively. BISG combines surname information with the local racial composition via Bayes’ rule:
\begin{equation}
    P(r \mid s, g)
    =
    \frac{P(r \mid s)\, P(g \mid r)}
         {\sum_{r'=1}^{R} P(r' \mid s)\, P(g \mid r')}.
\end{equation}
Here, $P(r \mid s)$ denotes the surname-based probability (i.e., typically obtained from Census surname tabulations), and $P(g \mid r)$ denotes the geography-based probability (i.e., the proportion of race $r$ in location $g$ within the entire race $r$ population). Originally developed in the context of health care research on racial and ethnic disparities, its use has since been institutionalized in regulatory settings, most notably in fair lending and discrimination audits conducted by U.S. regulatory agencies such as the Consumer Financial Protection Bureau \citep{cfpb2014using}.

Bayesian Improved First Name and Surname Geocoding (BIFSG) extends BISG by incorporating first-name information \citep{voicu2018using}. Let $P(r \mid s, f, g)$ denote the estimated probability
that an individual with surname $s$, first name $f$, and geographic location $g$ belongs to racial or ethnic group $r$. The BIFSG estimator can be written as:

\begin{equation}
    P(r \mid s, f, g) = \frac{P(r \mid s) \, P(f \mid r) \, P(g \mid r)}{\sum_{r'=1}^{R} P(r' \mid s) \, P(f \mid r') \, P(g \mid r')}.
\end{equation}

Here, $P(f \mid r)$ denotes the first-name-based probability. A key assumption underlying both BISG and BIFSG is that, conditional on race or ethnicity, geolocation, surname, and first name are mutually independent, i.e.,
\[
P(g \mid r, s) = P(g \mid r),
\qquad
P(f \mid r, s, g) = P(f \mid r).
\]

\subsection{Limitations of BISG and BIFSG}

While BISG and BIFSG provide practical tools for imputing race or ethnicity, their use relies on strong assumptions and introduces structured errors that are particularly consequential in downstream regression-based analyses. First, the conditional-independence assumption required by BISG and BIFSG is quite strong and often implausible in practice. For example, individuals may live near relatives who share surnames, violating the assumption that geography and surname are independent conditional on race \citep{greengard2023improved}. Second, the residuals of BISG and BIFSG frequently correlate with outcomes of interest and with other explanatory variables used in regression models \citep{voicu2018using, chen2019fairness, mccartan2024estimating}. Third, proxy construction requires non-missing surname, first-name, and geographic inputs; unmatched or missing information can lead to data attrition or heterogeneous use of reduced-information proxies, introducing additional selection and composition concerns \citep{voicu2018using}. Despite these concerns, BISG and BIFSG remain popular in practice because they often achieve seemingly high classification accuracy across racial groups, which may create a misleading impression that downstream regression-based analyses built on these proxies are likewise reliable.


\subsection{Measurement Errors of BISG or BIFSG in Fairness Analysis}

Beyond the intrinsic limitations of BISG and BIFSG as race-imputation tools, a growing literature examines how BISG-based racial proxies affect downstream fairness and discrimination analyses in empirical settings where benchmark race measures are available for comparison. Using empirical mortgage data, \citet{baines2014fair} and \citet{zhang2018assessing} show that regression-based discrimination analyses based on BISG-imputed race tend to overestimate racial outcome disparities relative to analogous analyses based on self-reported race. However, subsequent empirical evidence indicates that bias associated with proxy race measures can operate in either direction: attenuation of racial disparities has been documented both in loan approval analyses using image-based race benchmarks \citep{greenwald2024regulatory} and in mortgage outcome regressions depending on how proxies are incorporated \citep{voicu2018using}.

In a closely related study, \citet{chen2019fairness} analyze demographic disparity, defined as the difference in group-average outcomes between advantaged and disadvantaged groups. They provide interpretable sufficient conditions under which two common estimators -- the thresholded and weighted estimators -- over- or under-estimate that disparity. They decompose bias into inter-geolocation components (e.g., correlations between location and socioeconomic status in racially homogeneous areas) and intra-geolocation components (e.g., within-location patterns such as affirmative action favoring minority groups). The overestimation often observed in BISG-based empirical studies is consistent with inter-geolocation effects dominating the bias. The paper also shows that disparity estimates based on imputed race are highly sensitive to arbitrary tuning choices, such as the imputation threshold. \citet{chen2019fairness} focus on disparity estimation for group-average outcomes under proxy imputation; in contrast, our work considers regression-based discrimination analyses in which proxy-imputed race is used in place of true race.

In parallel, a growing literature addresses how to correct or mitigate bias when estimating outcome disparities using BISG-based racial proxies. \citet{kallus2022assessing} formulate fairness assessment with an unobserved protected class as a data combination problem. They show that commonly used disparity measures are not point-identifiable from proxy information alone and provide sharp partial-identification sets, with optimization procedures to compute and visualize these bounds. Relatedly, \citet{elzayn2025measuring} also adopt a partial-identification approach to estimate audit-rate disparities in the tax-audit setting. \citet{mccartan2025estimating} propose Bayesian Instrumental Regression for Disparity Estimation (BIRDiE), which uses surname as an instrument for race and BISG probabilities as inputs to obtain bias-corrected disparity estimates. They show that BISG plug-in estimates often attenuate disparities because BISG residuals correlate with outcomes; BIRDiE corrects this under the assumption that surname is conditionally independent of the outcome given (unobserved) race, location, and other observed covariates. 


Methodologically, we treat proxy race labels from BISG/BIFSG as measurements of true (unobserved) race subject to misclassification and summarize the resulting error structure using a confusion matrix. In this sense, our framework is related to the classical literature on regression with misclassified covariates in epidemiology \citep{shaw2020stratos, keogh2020stratos}. However, rather than proposing bias-correction or adjustment procedures, we focus on analyzing the implications of directly using proxy race as a categorical regressor in discrimination analyses. This framework allows us to show how proxy-based regression coefficients become weighted mixtures of true group effects and to establish sufficient conditions under which the use of racial proxies necessarily shrinks between-group variability in predicted outcomes. In contrast to prior work that emphasizes estimation error or uncertainty quantification, our results provide a mechanism-based explanation for why proxy race can distort regression-based discrimination analysis, even when individual-level classification accuracy appears high.

\section{Errors in Categorical Variables} \label{sec:errors-categorical}

We develop a general regression framework that isolates the mechanism through which misclassification in a categorical regressor distorts regression-based fairness audits. The analysis is fully abstract and does not depend on any specific application. Instead, we consider a setting in which a categorical regressor—representing a protected attribute—is observed with systematic misclassification and used directly in regression.

The key intuition is simple: misclassification moves observations between categories, altering the composition of the observed groups used for estimation. Each observed (proxy) group therefore reflects not only its own members but also individuals whose true categories differ. As a result, the empirical information associated with each observed category reflects contributions from multiple underlying true categories. The framework below formalizes how this reallocation affects coefficient estimation.


Consider a simple regression framework with a single categorical feature represented by a dummy-variable matrix $X \in \mathbb{R}^{n \times p}$ and associated response $\mathbf{y} \in \mathbb{R}^n$, under the usual linear-model assumption
\begin{equation}
    \mathbf{y} = X \boldsymbol{\beta} + \boldsymbol{\epsilon},
    \label{eq:linear-model}
\end{equation}
where $\boldsymbol{\beta} \in \mathbb{R}^p$ denotes the vector of category-specific coefficients, and $\boldsymbol{\epsilon} \in \mathbb{R}^n$ is the error term satisfying $\mathbb{E}[\boldsymbol{\epsilon} \mid X] = \mathbf{0}$ and $\operatorname{Var}(\boldsymbol{\epsilon} \mid X) = \sigma^2 I_n$. In many empirical settings, however, the true categorical labels $X$ (e.g., self-identified race) are unobserved. Instead, we only have access to an imputed or proxy version $\tilde{X}$, which is contaminated by misclassification errors:
\begin{equation}
    \tilde{X} = X + E,
    \label{eq:misclass}
\end{equation}
where $E$ captures the misclassification errors between the true and proxy labels. In the context of racial discrimination analysis, $X$ would represent the true race, while $\tilde{X}$ corresponds to race inferred from proxies such as Bayesian Improved Surname and Geocoding (BISG) or Bayesian Improved First Name Surname Geocoding (BIFSG).

\subsection{Estimation under Misclassified Covariates}

We next examine how estimation of $\boldsymbol{\beta}$ is affected when the categorical regressor is contaminated by misclassification. In particular, suppose that we estimate $\boldsymbol{\beta}$ using $\tilde{X}$ instead of $X$. The resulting estimator is given by
\begin{equation}
    \label{eq:beta-tilde}
    \tilde{\boldsymbol{\beta}} = \left( \tilde{X}^{\top} \tilde{X} \right)^{-1} \tilde{X}^{\top} \mathbf{y},
\end{equation}
where $\tilde{X} = X + E$ as in~\eqref{eq:misclass}. Substituting $\mathbf{y} = X \boldsymbol{\beta} + \boldsymbol{\epsilon}$ into~\eqref{eq:beta-tilde} yields
\begin{align}
\tilde{\boldsymbol{\beta}} 
&= \left( \tilde{X}^\top \tilde{X} \right)^{-1} \tilde{X}^\top \left( X\boldsymbol{\beta} + \boldsymbol{\epsilon} \right) \notag \\
&= \left( \tilde{X}^\top \tilde{X} \right)^{-1} \tilde{X}^\top \left( (\tilde{X} - E)\boldsymbol{\beta} + \boldsymbol{\epsilon} \right) \notag \\
&= \boldsymbol{\beta} 
- \left( \tilde{X}^\top \tilde{X} \right)^{-1} \tilde{X}^\top E\boldsymbol{\beta}
+ \left( \tilde{X}^\top \tilde{X} \right)^{-1} \tilde{X}^\top \boldsymbol{\epsilon}.
\label{eq:beta-tilde-decomp}
\end{align}
The decomposition in \eqref{eq:beta-tilde-decomp} separates the impact of misclassification from the usual sampling noise. Equivalently, we can rewrite \eqref{eq:beta-tilde-decomp} as
\begin{equation}
\label{eq:systematic-plus-noise}
\tilde{\boldsymbol{\beta}}
=
(\tilde{X}^{\top}\tilde{X})^{-1}\tilde{X}^{\top}X\boldsymbol{\beta}
\;+\;
(\tilde{X}^{\top}\tilde{X})^{-1}\tilde{X}^{\top}\boldsymbol{\epsilon}.
\end{equation}
The first term is the \emph{systematic (noise-free) component}: it is the deterministic mapping from the true group signal $X\boldsymbol{\beta}$ into the proxy categories induced solely by misclassification. Conditioning on $(X,\tilde X)$ and using $\mathbb{E}[\boldsymbol{\epsilon}\mid X]=\mathbf{0}$ yields
\begin{equation}
\label{eq:conditional-mean-beta-tilde}
\mathbb{E}[\tilde{\boldsymbol{\beta}}\mid X,\tilde X]
=
(\tilde{X}^{\top}\tilde{X})^{-1}\tilde{X}^{\top}X\boldsymbol{\beta},
\end{equation}
so the remaining term in \eqref{eq:systematic-plus-noise} captures sampling variation.

For comparison with the standard continuous-valued errors-in-variables decomposition and the associated attenuation intuition, see Appendix~\ref{app:continuous-me}. We formalize the decomposition of $\tilde{\boldsymbol{\beta}}$ below.

\begin{definition}[Decomposition of the Estimated Coefficient under Misclassification]
\label{def:decomposition}
Under the model \eqref{eq:linear-model} and the measurement-error structure \eqref{eq:misclass}, the estimator based on the proxy matrix $\tilde{X}$ can be decomposed as
\begin{equation}
\tilde{\boldsymbol{\beta}} = \boldsymbol{\beta}
     - (\tilde{X}^{\top} \tilde{X})^{-1} \tilde{X}^{\top} E \boldsymbol{\beta}
     + (\tilde{X}^{\top} \tilde{X})^{-1} \tilde{X}^{\top} \boldsymbol{\epsilon}.
\end{equation}
Here, $\boldsymbol{\beta}$ denotes the true coefficients, the term $-(\tilde{X}^{\top} \tilde{X})^{-1} \tilde{X}^{\top} E \boldsymbol{\beta}$ captures the bias due to misclassification, and the first two terms form the systematic (noise-free) component
$\mathbb{E}[\tilde{\boldsymbol{\beta}}\mid X,\tilde X]$. The term $(\tilde{X}^{\top} \tilde{X})^{-1} \tilde{X}^{\top} \boldsymbol{\epsilon}$ represents the stochastic variation due to sampling noise.
\end{definition}

\subsection{Category-wise Bias under Misclassification} \label{sec:category-bias}

We now focus on the structure of the bias term in Definition \ref{def:decomposition} when the categorical covariate is dummy-encoded. In this setting, each row of $X$ (and $\tilde{X}$) contains exactly one entry equal to 1, corresponding to the true (or predicted) category membership. Let $n_j$ denote the number of observations truly in category $j$, and let $\tilde{n}_j$ denote the number predicted as category $j$ under $\tilde{X}$. We also define $n_{k\to j}$ as the number of observations truly belonging to category $k$ but misclassified into category $j$, $n_{j\to j}$ as the number correctly classified as $j$, and $n_{\text{out},j}$ as the number of observations in category $j$ that are misclassified into some other category. Then, we have
\[
X^\top X = \operatorname{diag}(\mathbf{n}),
\qquad
\tilde{X}^\top \tilde{X} = \operatorname{diag}(\tilde{\mathbf{n}}).
\]
\paragraph{Derivation.}
The $j$-th component of the bias term in \eqref{eq:beta-tilde-decomp} can be written as
\begin{equation}
\label{eq:bias-component}
\big[-(\tilde{X}^\top\tilde{X})^{-1}\tilde{X}^\top E\boldsymbol{\beta}\big]_j
    = -\frac{1}{\tilde{n}_j}\sum_{i=1}^n \tilde{X}_{ij}\, (E_{i\cdot} \boldsymbol{\beta}),
\end{equation}
where $E_{i\cdot}$ denotes the $i$-th row of $E$, so that $E_{i\cdot} = \tilde{X}_{i\cdot} - X_{i\cdot}$. Here $X_{i\cdot}$ and $\tilde{X}_{i\cdot}$ denote the $i$-th rows of $X$ and $\tilde{X}$.
Since $E = \tilde{X} - X$ and each row of both matrices contains exactly one 1, we have
\[
E_{i\cdot} \boldsymbol{\beta}
    = (\tilde{X}_{i\cdot} - X_{i\cdot})\boldsymbol{\beta}
    = \beta_{\tilde{c}(i)} - \beta_{c(i)},
\]
where $\tilde{c}(i)$ and $c(i)$ denote the predicted and true categories of observation $i$, respectively. We now restrict the sum in \eqref{eq:bias-component} to those observations predicted as category $j$ (i.e., $\tilde{X}_{ij}=1$), since other observations contribute nothing to the summation:
\[
    \sum_{i=1}^n \tilde{X}_{ij}\,(E_{i\cdot} \boldsymbol{\beta})
    = \sum_{i:\tilde{c}(i)=j}(\beta_j - \beta_{c(i)}).
\]
We can decompose this summation into correctly classified and misclassified cases:
\[
\sum_{i:\tilde{c}(i)=j}(\beta_j - \beta_{c(i)})
    = n_{j\to j}(\beta_j - \beta_j)
      + \sum_{k\neq j} n_{k\to j}(\beta_j - \beta_k)
    = \sum_{k\neq j} n_{k\to j}(\beta_j - \beta_k).
\]
Substituting this into~\eqref{eq:bias-component} yields
\begin{equation*}
\label{eq:bias-intermediate}
\big[-(\tilde{X}^\top\tilde{X})^{-1}\tilde{X}^\top E\boldsymbol{\beta}\big]_j
    = \frac{1}{\tilde{n}_j}
      \left[
        -\Big(\sum_{k\neq j} n_{k\to j}\Big)\beta_j
        + \sum_{k\neq j} n_{k\to j}\beta_k
      \right].
\end{equation*}

Finally, noting that $n_{j\to j} = n_j - n_{\text{out},j}$ and that
\[
\tilde{n}_j = n_{j\to j} + \sum_{k\ne j} n_{k\to j}, 
\]
we arrive at the simplified expression
\begin{align*}
\big[-(\tilde{X}^\top\tilde{X})^{-1}\tilde{X}^\top E\boldsymbol{\beta}\big]_j
&= 
\frac{1}{\tilde{n}_j}
\left[
    -(\tilde{n}_j - n_{j\to j})\beta_j + \sum_{k\neq j} n_{k\to j}\beta_k
\right] \notag\\
&=
-\beta_j
+ \frac{1}{\tilde{n}_j}
  \left[
      n_{j\to j}\beta_j + \sum_{k\neq j} n_{k\to j}\beta_k
  \right] \notag\\
&=
-\beta_j
+ \frac{(n_j - n_{\text{out},j})\beta_j + \sum_{k\neq j} n_{k\to j}\beta_k}
       {\tilde{n}_j}.
\label{eq:bias-final}
\end{align*}

Intuitively, when individuals from one true category are systematically misclassified into another, the proxy category incorporates part of the signal from other true categories that are misclassified into it. The resulting coefficient is therefore a weighted mixture of group effects determined by the pattern of misclassification, as formalized below.

\begin{definition}[Category-wise Bias under Misclassification]
\label{def:category-bias}
For a dummy-coded categorical regressor, the bias in the estimated coefficient for category $j$ due to misclassification is given by
\begin{equation}
    \label{eq:category-bias-decomposition}
    \big[-(\tilde{X}^\top\tilde{X})^{-1}\tilde{X}^\top E\boldsymbol{\beta}\big]_j
=
-\beta_j + \frac{(n_j - n_{\text{out},j})\beta_j + \sum_{k\neq j} n_{k\to j}\beta_k}{\tilde{n}_j}.
\end{equation}
Equivalently, the systematic (noise-free) component of the proxy-based estimator admits the mixture representation
\begin{equation}
    \label{eq:mixture-representation}
    \mathbb{E}[\tilde{\beta}_j \mid X, \tilde{X}]
=
\frac{(n_j - n_{\text{out},j})\beta_j + \sum_{k\neq j} n_{k\to j}\beta_k}
     {\tilde{n}_j},
\end{equation}
which follows from $\mathbb{E}[\boldsymbol{\epsilon} \mid X] = \mathbf{0}$ under model \eqref{eq:linear-model}. This shows that $\tilde{\beta}_j$ is a weighted mixture of the true effect $\beta_j$ and the effects $\beta_k$ of other categories $k \neq j$ that are misclassified into category $j$. 
\end{definition}

\subsection{Expected Bias via the Confusion Matrix}

We now express the bias in terms of a confusion (misclassification) matrix that aggregates the flows between true and predicted categories. Let $C\in\mathbb{R}^{p\times p}$ be defined by
\[
C_{jk} = \Pr(\tilde{X}_{ij}=1 \mid X_{ik}=1),
\]
the probability that an observation from true class $k$ is predicted as class $j$. Let $\mathbf{n}=(n_1,\dots,n_p)^\top$ and $\tilde {\mathbf{n}}=(\tilde n_1,\dots,\tilde n_p)^\top$ denote the true and predicted class-count vectors, respectively, and let $\mathbf{n}\cdot\boldsymbol{\beta} := (n_1\beta_1,\dots,n_p\beta_p)^\top$ denote the elementwise product.

\begin{definition}[Expectation identities under a confusion matrix]
\label{def:confusion-identities}
With $C$ as above, the following expectation identities hold:
\begin{align}
\mathbb{E}\,[\tilde{\mathbf{n}} \mid \mathbf{n}] &= C\,\mathbf{n},
\label{eq:En-identity}
\\[2pt]
\mathbb{E}\,[\tilde{\mathbf{n}}\cdot \tilde{\boldsymbol{\beta}} \mid \mathbf{n}] &= C\,(\mathbf{n}\cdot \boldsymbol{\beta}).
\label{eq:Enb-identity}
\end{align}
\end{definition}

\begin{proof}
For \eqref{eq:En-identity}, write $n_{k\to j}=\sum_i \mathbf{1}\{X_{ik}=1,\ \tilde X_{ij}=1\}$ and $\tilde n_j=\sum_k n_{k\to j}$. 
Given $n_k$ items truly in class $k$, each is sent to class $j$ with probability $C_{jk}$, so 
$\mathbb{E}[n_{k\to j}\mid n_k]=C_{jk} n_k$. 
Summing over $k$ gives 
$\mathbb{E}[\tilde n_j\mid \mathbf{n}]=\sum_k \mathbb{E}[n_{k\to j}\mid \mathbf{n}]=\sum_k C_{jk} n_k=(C\,\mathbf{n})_j$. Stacking all $j$’s yields $\mathbb{E}\,[\tilde{\mathbf{n}} \mid \mathbf{n}] = C\,\mathbf{n}$.

From \eqref{eq:mixture-representation} and the derivation in Definition \ref{def:category-bias}, the systematic (noise-free) component satisfies 
\[
\tilde n_j\,\tilde\beta_j \;=\; \sum_{k=1}^p \beta_k\, n_{k\to j}.
\]
Taking the conditional expectation yields
\[
\mathbb{E}[\tilde n_j\,\tilde\beta_j \mid \mathbf{n}]
= \sum_k \beta_k\,\mathbb{E}[n_{k\to j}\mid \mathbf{n}]
= \sum_k \beta_k\, C_{jk}\, n_k
= \big(C\,(\mathbf{n}\cdot\boldsymbol{\beta})\big)_j,
\]
and stacking over $j$ yields \eqref{eq:Enb-identity}.

The identities in Definition~\ref{def:confusion-identities} show that, in expectation, the confusion matrix $C$ redistributes both (i) class counts and (ii) the ``signal mass'' $n_k\beta_k$ from true to predicted classes.
\end{proof}

\begin{proposition}[Approximate expected estimator under misclassification]
\label{prop:ratio-of-expectations}
Under a first-order (ratio-of-expectations) approximation,
\begin{equation}
\label{eq:roe}
\mathbb{E}\,\tilde{\boldsymbol{\beta}}
\;\approx\;
\operatorname{diag}\big(\mathbb{E}\,\tilde{\mathbf{n}}\big)^{-1}
\,\mathbb{E}\,\big(\tilde{\mathbf{n}}\cdot\tilde{\boldsymbol{\beta}}\big)
\;=\;
\operatorname{diag}(C\mathbf{n})^{-1}\,C\,(\mathbf{n}\cdot\boldsymbol{\beta}).
\end{equation}
\end{proposition}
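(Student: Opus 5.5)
The plan is to work componentwise, starting from the exact ratio representation of the systematic part of the proxy estimator and then applying a delta-method linearization of a ratio of random variables. From the mixture representation in Definition~\ref{def:category-bias}, the noise-free component of coordinate $j$ is
\[
\tilde\beta_j=\frac{A_j}{\tilde n_j},\qquad A_j:=\tilde n_j\tilde\beta_j=\sum_{k=1}^p \beta_k\,n_{k\to j}.
\]
The sampling term $(\tilde X^\top\tilde X)^{-1}\tilde X^\top\boldsymbol\epsilon$ has conditional mean zero given $(X,\tilde X)$. So by iterated expectations, $\mathbb{E}\tilde\beta_j=\mathbb{E}[A_j/\tilde n_j]$, where the remaining randomness is only in the misclassification flows $n_{k\to j}$, conditional on $\mathbf n$.

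Next I would expand $g(a,b)=a/b$ to first order around $(\mu_{A_j},\mu_{\tilde n_j})=(\mathbb{E}A_j,\mathbb{E}\tilde n_j)$:
\[
\frac{A_j}{\tilde n_j}\approx \frac{\mu_{A_j}}{\mu_{\tilde n_j}}+\frac{A_j-\mu_{A_j}}{\mu_{\tilde n_j}}-\frac{\mu_{A_j}(\tilde n_j-\mu_{\tilde n_j})}{\mu_{\tilde n_j}^2}.
\]
Taking expectations, the two linear terms vanish, which gives $\mathbb{E}\tilde\beta_j\approx \mu_{A_j}/\mu_{\tilde n_j}$. This is exactly the ratio-of-expectations form. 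Stacking over $j$ gives $\operatorname{diag}(\mathbb{E}\tilde{\mathbf n})^{-1}\mathbb{E}(\tilde{\mathbf n}\cdot\tilde{\boldsymbol\beta})$.

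I would then substitute the identities \eqref{eq:En-identity} and \eqref{eq:Enb-identity} from Definition~\ref{def:confusion-identities}: $\mathbb{E}\tilde{\mathbf n}=C\mathbf n$ and $\mathbb{E}(\tilde{\mathbf n}\cdot\tilde{\boldsymbol\beta})=C(\mathbf n\cdot\boldsymbol\beta)$. This yields the right-hand side of \eqref{eq:roe}. As a sanity check, row $j$ is $\sum_k C_{jk}n_k\beta_k\big/\sum_k C_{jk}n_k$, a convex combination of the $\beta_k$, consistent with the finite-sample mixture.

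The main obstacle is justifying the size of the neglected second-order remainder. For that I would model the rows as independent draws, so each $n_{k\to j}$ is binomial with parameters $(n_k,C_{jk})$ and independent across $k$. Then $\operatorname{Var}(\tilde n_j)$ and $\operatorname{Cov}(A_j,\tilde n_j)$ are $O(n)$, while $\mu_{\tilde n_j}$ is of order $n$. The second-order term
\[
\frac{\mu_{A_j}\operatorname{Var}(\tilde n_j)}{\mu_{\tilde n_j}^3}-\frac{\operatorname{Cov}(A_j,\tilde n_j)}{\mu_{\tilde n_j}^2}
\]
is therefore $O(1/n)$, provided $(C\mathbf n)_j$ grows linearly in $n$, i.e.\ every proxy class has nonvanishing expected size. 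There is also the event $\tilde n_j=0$, on which $\tilde\beta_j$ is undefined. I would handle it by conditioning on $\tilde n_j>0$ and noting that this event has exponentially small complement. Together these show the approximation error is $O(1/n)$, which is the sense in which the ``first-order'' statement holds.
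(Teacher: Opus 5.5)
Your proposal is correct and follows the same route as the paper: write $\tilde\beta_j$ as the ratio $(\tilde{\mathbf{n}}\cdot\tilde{\boldsymbol{\beta}})_j/\tilde n_j$, replace the expectation of the ratio by the ratio of expectations via a first-order delta-method expansion, substitute $\mathbb{E}\,\tilde{\mathbf{n}}=C\mathbf{n}$ and $\mathbb{E}(\tilde{\mathbf{n}}\cdot\tilde{\boldsymbol{\beta}})=C(\mathbf{n}\cdot\boldsymbol{\beta})$, and stack over $j$. Your explicit second-order remainder under the independent-binomial model, together with your treatment of the event $\tilde n_j=0$, goes beyond the paper's one-line argument and fleshes out its later remark on the $O(1/n_j)$ error, making clear that the relevant scale is the expected proxy-class size $(C\mathbf{n})_j$.
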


\begin{proof}
By definition, $\tilde{\beta}_j = (\tilde{\mathbf{n}}\cdot\tilde{\boldsymbol{\beta}})_j/\tilde{n}_j$.
Taking expectations and applying a ratio-of-expectations (equivalently, first-order delta-method) approximation,
\[
\mathbb{E}\,\tilde{\beta}_j
\;\approx\;
\frac{\mathbb{E}\big[(\tilde{\mathbf{n}}\cdot\tilde{\boldsymbol{\beta}})_j\big]}{\mathbb{E}\,\tilde{n}_j}
=
\frac{\big(C(\mathbf{n}\cdot\boldsymbol{\beta})\big)_j}{(C\mathbf{n})_j},
\]
where we used \eqref{eq:En-identity}–\eqref{eq:Enb-identity}. Stacking over $j$ yields \eqref{eq:roe}.
\end{proof}

\begin{corollary}[Expected Bias in Matrix Form]
\label{thm:compact-bias}
The expected bias of the proxy-based estimator satisfies
\begin{equation}
\label{eq:general-bias}
\boldsymbol{\beta} - \mathbb{E}\,\tilde{\boldsymbol{\beta}}
\;\approx\;
\operatorname{diag}(C\mathbf{n})^{-1}
\Big(\operatorname{diag}(C\mathbf{n})\,\boldsymbol{\beta}
- C(\mathbf{n}\cdot\boldsymbol{\beta})\Big).
\end{equation}
Moreover, under the neutrality condition $C\mathbf{n}=\mathbf{n}$ (i.e., expected inflows and outflows balance for each class), \eqref{eq:general-bias} simplifies to
\begin{equation}
\label{eq:neutral-bias}
\boldsymbol{\beta} - \mathbb{E}\,\tilde{\boldsymbol{\beta}}
\;\approx\;
\operatorname{diag}(C\mathbf{n})^{-1}(I_p - C)(\mathbf{n}\cdot\boldsymbol{\beta}).
\end{equation}
\end{corollary}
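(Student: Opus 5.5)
The plan is to derive the corollary directly from Proposition~\ref{prop:ratio-of-expectations}. It involves only algebraic manipulation of the ratio-of-expectations approximation, with no new probabilistic argument. Throughout, the symbol ``$\approx$'' is inherited from \eqref{eq:roe} and carries no additional approximation. I will also assume, as is implicit in \eqref{eq:roe}, that every entry of $C\mathbf{n}$ is strictly positive, so that $\operatorname{diag}(C\mathbf{n})$ is invertible. This holds whenever each predicted category receives positive expected mass.

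First, I subtract \eqref{eq:roe} from $\boldsymbol{\beta}$ to get
\[
\boldsymbol{\beta}-\mathbb{E}\,\tilde{\boldsymbol{\beta}}\approx \boldsymbol{\beta}-\operatorname{diag}(C\mathbf{n})^{-1}C(\mathbf{n}\cdot\boldsymbol{\beta}).
\]
I then write $\boldsymbol{\beta}=\operatorname{diag}(C\mathbf{n})^{-1}\operatorname{diag}(C\mathbf{n})\,\boldsymbol{\beta}$ and factor $\operatorname{diag}(C\mathbf{n})^{-1}$ out on the left. This gives \eqref{eq:general-bias} immediately. As a componentwise check, the $j$-th entry is
\[
\frac{(C\mathbf{n})_j\beta_j-\sum_k C_{jk}n_k\beta_k}{(C\mathbf{n})_j}=\frac{\sum_k C_{jk}n_k(\beta_j-\beta_k)}{(C\mathbf{n})_j}.
\]
This matches the expectation of the category-wise bias in Definition~\ref{def:category-bias} with $n_{k\to j}$ replaced by $C_{jk}n_k$. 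It is a useful consistency check and shows that the bias is a weighted average of pairwise group differences.

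For the neutral case, I impose $C\mathbf{n}=\mathbf{n}$. The key observation is the identity $\operatorname{diag}(\mathbf{n})\boldsymbol{\beta}=\mathbf{n}\cdot\boldsymbol{\beta}$. Hence $\operatorname{diag}(C\mathbf{n})\boldsymbol{\beta}=\operatorname{diag}(\mathbf{n})\boldsymbol{\beta}=\mathbf{n}\cdot\boldsymbol{\beta}$. The bracket in \eqref{eq:general-bias} then becomes $(\mathbf{n}\cdot\boldsymbol{\beta})-C(\mathbf{n}\cdot\boldsymbol{\beta})=(I_p-C)(\mathbf{n}\cdot\boldsymbol{\beta})$, which yields \eqref{eq:neutral-bias}. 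I would add that the prefactor may equivalently be written as $\operatorname{diag}(\mathbf{n})^{-1}$ under neutrality.

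There is no real obstacle here. The only point needing care is that neutrality is used exactly once, to replace $\operatorname{diag}(C\mathbf{n})\boldsymbol{\beta}$ by $\mathbf{n}\cdot\boldsymbol{\beta}$. Without it, the term $\operatorname{diag}(C\mathbf{n})\boldsymbol{\beta}$ does not collapse, and the expression stays in the general form \eqref{eq:general-bias}.
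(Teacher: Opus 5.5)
Your proof is correct and follows the paper's argument exactly. You subtract \eqref{eq:roe} from $\boldsymbol{\beta}$, factor out $\operatorname{diag}(C\mathbf{n})^{-1}$, and then use $\operatorname{diag}(\mathbf{n})\boldsymbol{\beta}=\mathbf{n}\cdot\boldsymbol{\beta}$ under neutrality. One small slip in your side check: your componentwise expression matches the \emph{negative} of the bias term in Definition~\ref{def:category-bias}, because that term is $\mathbb{E}[\tilde\beta_j\mid X,\tilde X]-\beta_j$, whereas the corollary reports $\beta_j-\mathbb{E}\,\tilde\beta_j$.
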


\begin{proof}
Subtract \eqref{eq:roe} from $\boldsymbol{\beta}$:
\[
\boldsymbol{\beta} - \mathbb{E}\,\tilde{\boldsymbol{\beta}}
\;\approx\;
\boldsymbol{\beta}
- \operatorname{diag}(C\mathbf{n})^{-1}C(\mathbf{n}\cdot\boldsymbol{\beta})
=
\operatorname{diag}(C\mathbf{n})^{-1}\Big(\operatorname{diag}(C\mathbf{n})\,\boldsymbol{\beta}
- C(\mathbf{n}\cdot\boldsymbol{\beta})\Big),
\]
which is \eqref{eq:general-bias}. If $C\mathbf{n}=\mathbf{n}$, then $\operatorname{diag}(C\mathbf{n})\,\boldsymbol{\beta}=\operatorname{diag}(\mathbf{n})\,\boldsymbol{\beta}=\mathbf{n}\cdot\boldsymbol{\beta}$, and \eqref{eq:general-bias} reduces to \eqref{eq:neutral-bias}.
\end{proof}

\begin{remark}[On the ratio-of-expectations approximation]
The approximation in Proposition~\ref{prop:ratio-of-expectations}
corresponds to the leading term of a delta-method expansion for the
ratio $(\tilde{\mathbf{n}}\cdot\tilde{\boldsymbol{\beta}})_j / \tilde{n}_j$.
The neglected second-order terms are of order $O(1/n_j)$, where $n_j$ is
the sample size of category $j$. These terms vanish as $n_j$ becomes
large, so the ratio-of-expectations approximation~\eqref{eq:roe}
provides an accurate description when each class is reasonably well
represented in the data.
\end{remark}

\subsection{Variability of Predicted Values Under Misclassification}

In this subsection, we show that under neutral and reversible misclassification, the variability of predicted values necessarily decreases. We begin by stating the two key structural conditions.

\begin{definition}[Neutral Condition]
\label{def:neutral-condition}
The “neutral” condition means that, on average, each category keeps its total count:
\begin{equation}
    \mathbb{E}[\tilde n] = C\mathbf{n} = \mathbf{n}.
    \label{eq:neutral}
\end{equation}
That is, misclassification moves some observations between categories, but the expected total per class remains the same. Under this condition, the mean predicted value $\bar{y}$ stays unchanged, even if individuals move between categories.
\end{definition}

\begin{definition}[Detailed Balance Condition]
\label{def:detailed-balance}
A column-stochastic matrix $C$ is \emph{reversible} with respect to $\mathbf{n}$ if it satisfies the detailed-balance condition
\begin{equation}
    C_{jk}\, n_k = C_{kj}\, n_j
    \qquad (j,k = 1,\dots,p).
    \label{eq:detailed_balance}
\end{equation}
Equivalently, $C$ is self-adjoint in the weighted inner product
$\langle \mathbf{u},\mathbf{v}\rangle_n = \mathbf{u}^\top \operatorname{diag}(\mathbf{n})^{-1} \mathbf{v}$. Since $C_{jk} = n_{k\to j}/n_k$ is the empirical misclassification rate
from true class $k$ to predicted class $j$, the detailed balance
condition $n_k C_{jk} = n_j C_{kj}$ is equivalent to
\[
\mathbb{E}[n_{k\to j}] = \mathbb{E}[n_{j\to k}],
\]
meaning that in expectation, the number of misclassified individuals from $k$ to $j$ is the same as the number misclassified from $j$ to $k$. In the context of BISG-based racial imputation, this corresponds to the empirical symmetry that, for example, the number of Whites misclassified as Blacks is approximately the same as the number of Blacks misclassified as Whites.
\end{definition}

\begin{proposition}[Variance Shrinkage under Neutral and Detailed Balance Conditions]
\label{prop:shrinkage}
Suppose misclassification is neutral in the sense of Definition~\ref{def:neutral-condition} and reversible in the sense of Definition~\ref{def:detailed-balance}. Then the variability of the fitted values necessarily decreases:
\begin{equation}
    \sum_{i=1}^n (\tilde y_i -\bar y)^2
    \;\le\;
    \sum_{i=1}^n (\hat y_i -\bar y)^2,
    \label{prop:var_shrinkage}
\end{equation}
where $\bar{y}$ denotes the overall mean prediction. 
\end{proposition}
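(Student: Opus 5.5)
The plan is to reduce the claim to a norm inequality for a symmetric matrix with spectral radius at most one. I work at the level of the systematic component described by Proposition~\ref{prop:ratio-of-expectations}, taking $\tilde{\mathbf n}=\mathbb{E}\tilde{\mathbf n}=\mathbf n$ by the neutral condition, so that the inequality concerns the expected estimator rather than every sample realisation. Write $D=\operatorname{diag}(\mathbf n)$. The fitted values are $\hat y_i=\beta_{c(i)}$ under the true labels and $\tilde y_i=\tilde\beta_{\tilde c(i)}$ under the proxy labels.

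\emph{Step 1 (both sums are weighted sums over classes).} Grouping observations by class gives
\[
\sum_i(\hat y_i-\bar y)^2=\sum_j n_j(\beta_j-\bar y)^2,
\qquad
\sum_i(\tilde y_i-\bar y)^2=\sum_j \tilde n_j(\tilde\beta_j-\bar y)^2=\sum_j n_j(\tilde\beta_j-\bar y)^2 .
\]

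\emph{Step 2 (the proxy coefficients are a row-stochastic average).} By \eqref{eq:roe} and $C\mathbf n=\mathbf n$, we have $\tilde{\boldsymbol\beta}=D^{-1}CD\boldsymbol\beta=:P\boldsymbol\beta$. The matrix $P$ has nonnegative entries. It is row-stochastic, since $P\mathbf 1=D^{-1}C\mathbf n=\mathbf 1$.

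\emph{Step 3 (the overall mean is preserved).} Column-stochasticity of $C$ gives $\mathbf 1^\top C=\mathbf 1^\top$, so $\mathbf n^\top\tilde{\boldsymbol\beta}=\mathbf 1^\top CD\boldsymbol\beta=\mathbf n^\top\boldsymbol\beta$. Hence $\bar y$ is the same under both labelings, as asserted in Definition~\ref{def:neutral-condition}. Because $P\mathbf 1=\mathbf 1$, we also get $\tilde{\boldsymbol\beta}-\bar y\mathbf 1=P(\boldsymbol\beta-\bar y\mathbf 1)$.

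\emph{Step 4 (symmetrisation via detailed balance).} Set $\mathbf v=D^{1/2}(\boldsymbol\beta-\bar y\mathbf 1)$ and $A=D^{-1/2}CD^{1/2}$. Then $D^{1/2}P=AD^{1/2}$, so the proxy sum equals $\|A\mathbf v\|^2$ and the true-label sum equals $\|\mathbf v\|^2$. Detailed balance \eqref{eq:detailed_balance} says exactly that $CD$ is symmetric. Since $A=D^{-1/2}(CD)D^{-1/2}$, it follows that $A$ is symmetric.

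\emph{Step 5 (spectral bound).} $A$ is similar to $C$, so its eigenvalues are real and coincide with those of $C$. A column-stochastic nonnegative matrix has spectral radius $1$. A symmetric matrix has operator norm equal to its spectral radius, so $\|A\|_2\le 1$. Therefore $\|A\mathbf v\|^2\le\|\mathbf v\|^2$, which is \eqref{prop:var_shrinkage}.

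The linear algebra in Steps~4 and~5 is routine once detailed balance is used to symmetrise. Equality holds only if $\mathbf v$ lies in eigenspaces with $|\lambda|=1$, and this could be noted as a remark.

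The main obstacle is the interpretive step in Step~1: deciding precisely which objects the inequality concerns. I would state explicitly that $\tilde y$ is built from the systematic component of $\tilde{\boldsymbol\beta}$ with counts replaced by their expectations, invoking Proposition~\ref{prop:ratio-of-expectations} and the remark following Corollary~\ref{thm:compact-bias}. Under that reading the argument is exact. For a finite sample it holds only up to $O(1/n_j)$ ratio-approximation and sampling fluctuations.
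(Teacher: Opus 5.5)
Your proposal is correct and follows essentially the same route as the paper. Your $\mathbf v = D^{1/2}(\boldsymbol\beta-\bar y\mathbf 1)$ is exactly the paper's $\mathbf z=\operatorname{diag}(\mathbf n)^{-1/2}\mathbf w$, your $A$ is its symmetrized $M$, both rest on the same ratio-of-expectations reading with $C\mathbf n=\mathbf n$, and working through the row-stochastic $P=D^{-1}CD$ is only a cosmetic difference — though you usefully make explicit two points the paper merely asserts: that $\bar y$ is preserved, and that the eigenvalues lie in $[-1,1]$ because $C$ is column-stochastic.
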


\begin{proof}
For dummy-coded predictors, all observations in category $j$ share the same fitted value $\beta_j$. Thus the total variability of predicted values can be written as a weighted variance of group means:
\begin{equation}
\sum_{i=1}^{n} (\hat{y}_i - \bar{y})^2
= \sum_{j=1}^{p} n_j(\beta_j - \bar{\beta})^2
= \big( (\mathbf{n}\!\cdot\!\boldsymbol{\beta}) - \bar{\beta} \mathbf{n} \big)^\top
\operatorname{diag}(\mathbf{n})^{-1}
\big( (\mathbf{n}\!\cdot\!\boldsymbol{\beta}) - \bar{\beta} \mathbf{n} \big),
\label{eq:origvar}
\end{equation}
where $\bar{\beta} = \frac{\sum_{j=1}^p n_j \beta_j}{\sum_{j=1}^p n_j}$. Define
\[
\mathbf{w} = (\mathbf{n}\!\cdot\!\boldsymbol{\beta}) - \bar{\beta}\mathbf{n},
\qquad
\|\mathbf{u}\|_n^2 := \mathbf{u}^\top \operatorname{diag}(\mathbf{n})^{-1}\mathbf{u},
\]
so that the variability under the true labels is simply
\[
\sum_{i=1}^n (\hat y_i - \bar y)^2 = \|\mathbf{w}\|_n^2.
\]
Under misclassification, the expected signal and expected class counts become $C(\mathbf{n}\cdot\boldsymbol{\beta})$ and $C\mathbf{n}$ (Definition~\ref{def:confusion-identities}). Applying the same representation as above and the ratio-of-expectations approximation gives
\[
\sum_{i=1}^n (\tilde y_i -\bar y)^2
\;\approx\;
\big( C(\mathbf{n}\!\cdot\!\boldsymbol{\beta}) - \bar{\beta} \, C\mathbf{n} \big)^\top
\operatorname{diag}(C\mathbf{n})^{-1}
\big( C(\mathbf{n}\!\cdot\!\boldsymbol{\beta}) - \bar{\beta} \, C\mathbf{n} \big).
\]

If misclassification is neutral, $C\mathbf{n}=\mathbf{n}$, so
\[
C(\mathbf{n}\!\cdot\!\boldsymbol{\beta}) - \bar\beta\,C\mathbf{n}
= C\big( (\mathbf{n}\!\cdot\!\boldsymbol{\beta})-\bar\beta \mathbf{n} \big)
= C\mathbf{w}.
\]
Hence the approximate variability after misclassification becomes
\[
\sum_{i=1}^n (\tilde y_i -\bar y)^2
\;\approx\;
\|C\mathbf{w}\|_n^2.
\]

To show contraction, consider the similarity transform
\[
M := \operatorname{diag}(\mathbf{n})^{-1/2}\, C \,\operatorname{diag}(\mathbf{n})^{1/2}.
\]

Under the detailed balance condition $C_{jk} n_k = C_{kj} n_j$, the matrix $M$ is symmetric and therefore admits an orthonormal eigenbasis with real eigenvalues $1=\lambda_1 \ge \lambda_2 \ge \cdots \ge \lambda_p \ge -1$. Define the standardized vector $\mathbf{z} := \operatorname{diag}(\mathbf{n})^{-1/2}\mathbf{w}$. Then
\[
\|\mathbf{w}\|_n^2 = \|\mathbf{z}\|_2^2,
\qquad
\|C\mathbf{w}\|_n^2 = \|M\mathbf{z}\|_2^2.
\]
Writing $\mathbf{z} = \sum_{\ell} a_\ell \mathbf{v}_\ell$ in the orthonormal eigenbasis $\{\mathbf{v}_\ell\}$ of $M$, we obtain
\[
\|C\mathbf{w}\|_n^2
= \|M\mathbf{z}\|_2^2
= \sum_{\ell=1}^p \lambda_\ell^2 a_\ell^2
\;\le\;
\sum_{\ell=1}^p a_\ell^2
= \|\mathbf{z}\|_2^2
= \|\mathbf{w}\|_n^2.
\]

Combining this with the representations above, we conclude that under neutral
and reversible misclassification,
\[
\sum_{i=1}^n (\tilde y_i -\bar y)^2
\;\le\;
\sum_{i=1}^n (\hat y_i -\bar y)^2,
\]
so misclassification necessarily shrinks the variability of the predicted values.

\end{proof}

\section{Empirical Studies} \label{sec:empirical}

\subsection{NC Voter--Insurance Matched Data}
Our empirical analysis uses a combined dataset constructed by linking publicly available voter registration records in North Carolina with zip-code--level auto insurance premium information. We focus on North Carolina because its voter file contains individual-level identifiers--in particular, full name and self-reported race/ethnicity--that are widely used in studies evaluating BISG-based racial proxies\footnote{The layout of the North Carolina voter file is publicly documented at \url{https://s3.amazonaws.com/dl.ncsbe.gov/data/layout_ncvoter.txt}.}.

Zip-code--level insurance premiums are obtained via web crawling from \textit{CarInsurance.com}, which reports average auto insurance rates produced by Quadrant Information Services. These premiums are calculated for a standardized insured profile: a driver with clean records, good credit, and state-minimum required liability coverage driving a 2017 Honda Accord\footnote{Because the premium measure is based on a fully standardized driver profile—with fixed credit quality, driving history, vehicle type, and coverage level—this construction is equivalent to intervening on these rating factors. In causal terms, these interventions remove all pathways through which race could influence premiums via creditworthiness, driving behavior, or vehicle choice. Hence, in our dataset, racial differences in the imputed premium can arise only through ZIP code, age, or gender, which are the only factors that vary across individuals.}. The website provides average monthly premiums for each combination of ZIP code, age group, and gender using data from up to six major insurance carriers. We collect these averages for the first 100 ZIP codes in North Carolina (ranked by voter population) for all combinations of age group, gender, and ZIP code. The age dimension in the premium data is reported as single-year ages from 16 to 24 and, for older individuals, at selected representative ages (25, 30, 35, \dots, 75).

We then merge these premium data with North Carolina’s individual-level voter registration records obtained from the state’s Board of Elections. For each individual, we assign an insurance premium corresponding to the average premium of their ZIP code $\times$ age group $\times$ gender cell. We refer to the resulting linked dataset as the NC Voter–Insurance Matched Data, which contains (i) individual demographic characteristics from the voter file; (ii) true race/ethnicity as self-reported in the voter records; and (iii) an imputed insurance premium representing the average cost faced by a standardized driver with similar demographic attributes in that ZIP code. Additional details on data sources and sample construction are provided in Appendix~\ref{app:datasets-used}.

\begin{table}[!ht]
\centering
\caption{Confusion matrix with row/column sums and per-row accuracy}
\label{tab:confusion}
\begin{tabular}{lrrrrrrr}
\toprule
 & \multicolumn{5}{c}{Reference} & Row Sum & Accuracy (\%) \\
\cmidrule(lr){2-6}
Prediction & Asian & Black & Hispanic & Others & White & & \\
\midrule
Asian    & 17907 &   448 &   380 &  6298 &    1815 &   26848  & 66.7 \\
Black    &  1836 & 280480 &  3983 & 11346 &   82036 &  379681 & 73.9 \\
Hispanic &  1087 &  2765 & 74506 &  6997 &   15313 &  100668 & 74.0 \\
Others   &  4089 & 10060 &  2496 &  7327 &   12737 &   36709 & 20.0 \\
White    &  6533 & 140072 & 12409 & 22671 & 1126267 & 1307952 & 86.1 \\
\midrule
Column Sum & 31452 & 433825 & 93774 & 54639 & 1238168 & 1851858 & \\
\bottomrule
\end{tabular}
\end{table}

The confusion matrix in Table~\ref{tab:confusion} is constructed using the BIFSG \emph{max-classification} rule, which assigns each individual to the racial group with the highest BIFSG posterior probability. The BIFSG probabilities are computed using the \texttt{wru} package \citep{khanna2022wru}, drawing on 2020 U.S.\ Census ZCTA-level data. The overall accuracy patterns are consistent with prior findings in the BISG/BIFSG literature: White and Black voters are classified with relatively high accuracy (86.1\% and 73.9\%, respectively), while the “Others’’ category exhibits substantial noise.

Importantly, the misclassification structure aligns with the methodological conditions discussed earlier. In particular, the flows between Black and White voters are comparatively balanced, producing an approximate symmetry that is relevant for the neutrality and detailed-balance conditions. 

\textbf{Socioeconomic Characteristics.} 
To incorporate ZIP-level socioeconomic status (SES) into the analysis, we supplement the matched data with ZIP Code Tabulation Area (ZCTA)--level socioeconomic indicators from the National Neighborhood Data Archive (NaNDA) \citep{nanda38528}. NaNDA provides ZCTA-level measures derived primarily from the American Community Survey (ACS) five-year estimates (2018–2022), including indicators capturing poverty, unemployment, income distribution, educational attainment, and neighborhood disadvantage and affluence. For our empirical analysis, we extract three SES variables -- median family income (MEDFAMINC), poverty rate (PPOV), and unemployment rate (PUNEMP) -- to account for socioeconomic variation across ZIP codes that may correlate with both racial composition and insurance pricing. These indicators capture socioeconomic variation relevant for assessing potential confounding channels in pricing and proxy-based discrimination analysis. Definitions of all variables used in Section~\ref{sec:empirical} are summarized in Appendix~\ref{app:variables-used}.

\subsection{Experiment 1: Regressing True Fitted Values on Imputed Race Labels}
\begin{table}[!t]
\centering
\caption{Experiment 1: Pricing Disparities Using Reported and BIFSG Race Labels}
\label{tab:exp1}
\renewcommand{\arraystretch}{1.05}

\text{Panel A: Raw Pricing Disparities} \\[0.2em]
\begin{tabular}{lccc}
\toprule
Race Group & Reported Race & Proxy-Induced Mixing & Proxy Race \\
\midrule
Asian      & 2.041 & 1.677 & 1.765 \\
Black      & 2.655 & 1.712 & 3.421 \\
Hispanic   & 3.546 & 2.524 & 3.323 \\
Others     & 2.533 & 1.330 & 2.928 \\
\bottomrule
\end{tabular}

\vspace{0.7em}

\text{Panel B: Adjusted Pricing Disparities} \\[0.2em]
\begin{tabular}{lcccc}
\toprule
 & \multicolumn{2}{c}{Reported Race} & \multicolumn{2}{c}{Proxy Race} \\
\cmidrule(lr){2-3} \cmidrule(lr){4-5}
Race Group & Control 1 & Control 2 & Control 1 & Control 2 \\
\midrule
Asian      & 1.399 & 1.180 & 1.385 & 1.183 \\
Black      & 2.550 & 1.736 & 3.263 & 2.272 \\
Hispanic   & 1.494 & 1.081 & 1.511 & 1.137 \\
Others     & 1.762 & 1.342 & 1.768 & 1.386 \\
\bottomrule
\end{tabular}

\begin{flushleft}
\footnotesize
Notes: Panel~A reports raw pricing disparities by race without controls. 
``Reported Race'' uses self-reported race from the NC voter file, which implements the regression \eqref{eq:exp1_eq1}. 
``Proxy-Induced Mixing'' implements the regression \eqref{eq:exp1_eq2}, 
which regresses fitted values from the reported-race model onto BIFSG-max labels, capturing the systematic mixing (noise-free) of true group effects induced by misclassification.
``Proxy Race'' reports disparities using BIFSG-max labels, which implements the regression \eqref{eq:exp1_eq3}.
Panel~B reports adjusted disparities from OLS regressions. 
``Control~1'' includes age-group and gender.
``Control~2'' additionally include  ZIP-level socioeconomic variables: median family income (MEDFAMINC), poverty rate (PPOV), and unemployment rate (PUNEMP). White is the reference group throughout, consistent with standard practice in regression-based disparity assessments.
\end{flushleft}
\end{table}

Our first experiment recovers the systematic (noise-free) component of the proxy-based regression coefficients characterized in Definition~\ref{def:decomposition}, isolating the intrinsic effect of replacing the true race labels $X$ with proxy labels $\tilde{X} = X + E$ in standard regression-based disparity analyses. As shown in our analytical derivations, misclassification acts on the true signal through the operator
\[
(\tilde{X}^{\top}\tilde{X})^{-1}\tilde{X}^{\top}X \boldsymbol{\beta},
\]
which redistributes the group effects $X\boldsymbol{\beta}$ across the proxy categories as shown in Definition~\ref{def:category-bias}.  To approximate this transformation empirically, we proceed in two steps.

\begin{enumerate}[leftmargin=1.5em]
    \item \textbf{Reported-Race Regression.}  
    We regress the insurance premium on the true race categories:
    \begin{align}
        y_\text{true} \sim \text{race}_{\text{true}},
        \label{eq:exp1_eq1}
    \end{align}
    obtaining coefficient estimates $\hat{\beta}$ representing disparities under the true labels.

    \item \textbf{Proxy-Induced Mixing Regression.}  
    We compute fitted values $\widehat{y}_{\text{true}} = X\hat{\beta}$, then regress these fitted values on
    the imputed race labels based on BIFSG-max:
    \begin{align}
        \widehat{y}_{\text{true}} \sim \text{race}_{\text{proxy}}.
        \label{eq:exp1_eq2}
    \end{align}
    This regression provides an empirical estimate of the transformation $(\tilde{X}^{\top}\tilde{X})^{-1}\tilde{X}^{\top}X \boldsymbol{\beta}$, 
    implemented using the sample analogue based on $\hat{\beta}$, and thus captures the systematic mixing of true group effects induced by misclassification, independent of sampling noise, as shown in Definition~\ref{def:decomposition}.

    \item \textbf{Proxy-Race Regression.}  
    Finally, we regress the insurance premium directly on the proxy race labels based on BIFSG-max:
    \begin{align}
        y_\text{true} \sim \text{race}_{\text{proxy}}.
        \label{eq:exp1_eq3}
    \end{align}
    This yields the full proxy-based regression coefficient, which combines systematic mixing and sampling variation, corresponding to the estimator characterized in Definition~\ref{def:decomposition}.

\end{enumerate}

Panel A of Table~\ref{tab:exp1} reports the coefficients from the true-race model and their proxy-based transformed counterparts. Across all racial groups, the mapping coefficients are substantially attenuated relative to the true-race coefficients. This pattern matches the mixture representation in Definition~\ref{def:category-bias} of Section~\ref{sec:category-bias}, under which each proxy coefficient averages together the true group effect $\beta_j$ with the effects of other groups that are misclassified into category $j$. In North Carolina, Whites constitute the majority population and also face the lowest average premiums in our merged dataset. As a result, White individuals misclassified into minority proxy groups mechanically pull the proxy-based coefficients for Asian, Black, Hispanic, and Others toward the White-group baseline. In other words, the large White group acts as an anchor in the mixture, mechanically compressing the minority-group disparities. The “BIFSG Race’’ column shows that once we reintroduce residual noise by regressing the observed premium directly on proxy race, some disparities -- particularly for Black and Others -- can be inflated relative to the true-race benchmark, even though the underlying misclassification structure is intrinsically attenuating.

Panel B of Table~\ref{tab:exp1} reports adjusted disparities after controlling for demographic and socioeconomic effects. Reporting conditional disparities is standard practice in regression-based discrimination analysis and is commonly required in regulatory testing frameworks (e.g., Colorado’s draft SB21-169 testing framework), where pricing differences are evaluated after accounting for approved rating factors such as age and gender. 

In addition to these pricing variables (Control~1), we also incorporate ZIP-level socioeconomic characteristics (Control~2). While these SES variables are not themselves direct pricing factors in our premium construction, they capture neighborhood-level conditions that may be correlated with both racial composition and insurance pricing. Including them allows us to assess whether racial disparities persist after adjusting for broader socioeconomic structure. 

As expected, adding controls reduces the magnitude of the reported-race disparities relative to the raw differences in Panel~A, suggesting that part of the observed pricing gaps can be statistically accounted for by differences in demographic composition and neighborhood socioeconomic conditions. However, the comparison between reported race and BIFSG race shows that proxy-based regressions can either attenuate or inflate disparities once controls are included. For example, the BIFSG-based Black coefficients remain larger than their reported-race counterparts under both sets of controls, while the Asian coefficients are more strongly compressed under BIFSG. These patterns illustrate that, in the presence of additional covariates and residual variation, the proxy-based regression no longer corresponds to a pure algebraic mapping of $X\hat{\beta}$. Taken together, Panel~A and Panel~B show that misclassification imposes a structural shrinkage toward the White-dominated mean, but once proxy race becomes correlated with the remaining pricing residuals after adjustment for observable controls, the resulting conditional disparities can be either attenuated or amplified relative to analyses based on self-reported race.

\subsection{Experiment 2: ZIP-Level Racial Deviation and Misclassification}

Experiment 1 isolates the intrinsic mixing of group effects induced by proxy misclassification, showing how misclassification redistributes true group disparities across proxy categories. However, the empirical results in Table~\ref{tab:exp1} also reveal that proxy-based regressions can produce disparities that deviate from the mixing-only benchmark--sometimes even amplifying disparities relative to analyses based on self-reported race. This indicates that proxy-induced mixing alone does not fully account for the observed distortion in regression-based fairness assessments, suggesting the presence of additional channels through which proxy race affects inference.

One plausible mechanism is that proxy misclassification is not random but systematically structured by local demographic and socioeconomic conditions. Intuitively, when a ZIP code is racially homogeneous--such as predominantly Black or predominantly White areas--individuals are more likely to be assigned to the locally predominant group by the proxy algorithm \citep{greenwald2024regulatory}. Recent empirical evidence further shows that BISG misclassification--including the magnitude of the resulting estimation error--is systematically structured by neighborhood socioeconomic characteristics and local racial composition \citep{argyle2024misclassification}. 

Experiment~2 therefore investigates whether structured proxy misclassification generates additional distortion channels beyond intrinsic mixing. Specifically, we examine (i) whether misclassification rates vary systematically with ZIP-level racial composition--particularly in more racially homogeneous areas--and (ii) whether misclassification patterns that remain after accounting for ZIP-level racial composition and socioeconomic heterogeneity continue to be correlated with pricing residuals.


Let $n_{ik}$ denote the number of voters of race $k$ in ZIP code $i$, and $\tilde{n}_{ik}$ the corresponding BIFSG-max counts. Let $n_k$ denote the statewide population of race $k$ and $n_i = \sum_k n_{ik}$ the population of ZIP $i$. To express all quantities on comparable ZIP-scale levels, we define the ZIP $\times$ race deviation and misclassification counts as
\[
d_{ik} := n_i\!\left(\frac{n_{ik}}{n_i} - \frac{n_k}{n}\right),
\qquad
r_{ik} := n_{ik} - \tilde{n}_{ik}.
\]
Thus, $d_{ik}$ measures how much ZIP $i$'s racial composition deviates (in counts) from statewide proportions, while $r_{ik}$ measures the direction and magnitude of BIFSG-max misclassification.

Let $\varepsilon_j$ be the individual residual from the reported-race model. The ZIP $\times$ race residual is defined as the sum of individual residuals:
\[
\epsilon_{ik} := \sum_{j \in (i,k)} \varepsilon_j,
\]
so that $d_{ik}$, $r_{ik}$, and $\epsilon_{ik}$ in the analysis are expressed on a comparable scale.

\begin{enumerate}[leftmargin=1.5em]

    \item \textbf{Misclassification–ZIP-Level Deviation Regression.}
    We estimate
    \begin{align}
        r_{ik} = \alpha_0 + \alpha_1 d_{ik} + \eta_{ik},
        \label{eq:exp2_eq1}
    \end{align}  
    where $\alpha_1$ captures whether misclassification increases in racially more homogeneous ZIP codes, i.e., those with larger $|d_{ik}|$.

    \item \textbf{Premium Residual–Misclassification Residual Regression.}
    We then assess whether the remaining misclassification residuals relate to pricing residuals:
    \begin{align}
        \epsilon_{ik} = \gamma_d d_{ik} + \gamma_e \eta_{ik} + \xi_{ik}.
        \label{eq:exp2_eq2}
    \end{align}
    Here, $d_{ik}$ acts as a confounder: ZIP codes with extreme racial compositions differ systematically in socioeconomic and risk characteristics, which affect both premiums and the scale of misclassification. The coefficient $\gamma_d$ therefore captures ZIP-level confounding, while $\gamma_e$ measures any remaining correlation between misclassification residuals and pricing residuals.

\end{enumerate}




\begin{table}[H]
\centering
\caption{Experiment 2: ZIP-Level Regressions for Misclassification and Premium Residuals}
\label{tab:exp2_diff}
\renewcommand{\arraystretch}{1.05}

\textbf{Panel A: Misclassification Regression ($r_{ik} \sim d_{ik}$)} \\[0.3em]
\begin{tabular}{lcccc}
\toprule
 & \multicolumn{2}{c}{White} & \multicolumn{2}{c}{Black} \\
\cmidrule(lr){2-3} \cmidrule(lr){4-5}
 & Baseline & + SES & Baseline & + SES \\
\midrule
Intercept 
 & 697.84*** & 878.97 & $-$541.44*** & -735.12      \\
Deviation ($d_i$) 
 & 0.1769*** & 0.1305*** & 0.2240*** & 0.1849*** \\
\midrule
Residual SE 
 & 781.9 & 726.7 & 711.0 & 692.6 \\
SES controls 
 & No & Yes & No & Yes \\
Observations 
 & 100 & 100 & 100 & 100 \\
\bottomrule
\end{tabular}

\vspace{0.8em}

\textbf{Panel B: Premium Residual Regression ($\epsilon_{ik} \sim d_{ik} + e_{ik}$)} \\[0.3em]
\begin{tabular}{lcccc}
\toprule
 & \multicolumn{2}{c}{White} & \multicolumn{2}{c}{Black} \\
\cmidrule(lr){2-3} \cmidrule(lr){4-5}
 & Baseline & + SES & Baseline & + SES \\
\midrule
Deviation ($d_i$) 
 & $-$6.655*** & -7.474***  & 3.473*** & 3.737***  \\
Misclass.\ Residual ($e_i$) 
 & 19.951** & 18.074**  & 7.313* &  8.776*  \\
\midrule
Residual SE 
 & 46380 &  45120 & 23870 & 23800  \\
SES controls 
 & No & Yes & No & Yes \\
Observations 
 & 100 & 100 & 100 & 100 \\
\bottomrule
\end{tabular}

\begin{flushleft}
\footnotesize
Notes: Panel~A corresponds to regression \eqref{eq:exp2_eq1},
$r_{ik} = \alpha_0 + \alpha_1 d_{ik} + \eta_{ik}$,
where $d_{ik}$ measures ZIP--race deviation from statewide racial composition and
$r_{ik}$ represents the BIFSG-max misclassification displacement. 
Panel~B implements regression \eqref{eq:exp2_eq2},
$\epsilon_{ik} = \gamma_d d_{ik} + \gamma_e \eta_{ik} + \xi_{ik}$,
where $\epsilon_{ik}$ denotes the ZIP $\times$ race residual sum from the reported-race premium model.
SES controls include ZIP-level median family income (MEDFAMINC), poverty rate (PPOV), and unemployment rate (PUNEMP).
Significance codes: * p$<$.05, ** p$<$.01, *** p$<$.001.
\end{flushleft}
\end{table}

Table \ref{tab:exp2_diff} summarizes the ZIP × race regressions for misclassification (Panel A) and premium residuals (Panel B). Panel A corresponds to \eqref{eq:exp2_eq1} and shows that BIFSG-max misclassification is strongly shaped by ZIP-level racial composition: the deviation coefficient $\alpha_1$ is positive and highly significant for both White and Black cohorts, indicating that misclassification increases in racially more homogeneous ZIP codes. Adding SES controls reduces the magnitude of $\alpha_1$ but leaves the main effect unchanged. Panel~B corresponds to \eqref{eq:exp2_eq2} and examines how these ZIP-level patterns relate to premium residuals. ZIP-level deviation ($d_{ik}$) is significant for both White and Black cohorts, highlighting that ZIPs with extreme racial compositions also differ systematically in socioeconomic and risk characteristics that affect premiums. After controlling for this structural component -- both via $d_{ik}$ and the SES variables -- the residual misclassification component ($\eta_{ik}$) remains significantly associated with ZIP-level premium residuals. The effect is larger for Whites, suggesting that proxy errors generate additional spurious ZIP-level correlations beyond those driven by racial heterogeneity or socioeconomic conditions. Together, the results indicate a two-step mechanism: (1) BIFSG misclassification is amplified in racially homogeneous ZIP codes, and (2) the resulting misclassification residuals introduce additional correlations with pricing residuals even after adjusting for ZIP-level racial and socioeconomic structure.

\section{Policy Implications for Regulators and Industry Practitioners} \label{sec:policy}

Our results have several implications for regulators and industry practitioners considering proxy-based quantitative testing frameworks.

First, regression-based fairness analyses that rely on proxy race require more caution than is often acknowledged in regulatory and industry discussions. In our empirical application, BISG/BIFSG misclassification produces a systematic shrinkage toward majority-group baselines, which can mask substantial underlying differences, particularly in markets where a single racial group constitutes a large majority of the population. At the same time, once proxy race becomes correlated with the pricing residuals that remain after adjusting for demographic and socioeconomic covariates, the resulting regression coefficients can either attenuate or inflate observed disparities relative to analyses based on self-reported race. These patterns illustrate that proxy race is not a neutral substitute for protected-class information, especially when regression outputs are used to support regulatory testing, supervisory assessments, or enforcement actions.

Second, proxy-based regression is not the only operational pathway for conducting quantitative fairness assessments. The use of proxy inference is better understood as a workaround driven by the absence of protected-class information -- a basic tension in which collecting the data needed to test for fairness is often restricted by fairness regulations. As highlighted in Section 4 of the SOA’s recent report \citep{SOAImputingRace2024}, direct demographic data collection remains possible under certain conditions, but it is frequently constrained by legal prohibitions, consumer reluctance, and concerns about how such information would be used. A complementary approach, emphasized in UK government work on demographic data access \citep{UKGov2023DemographicDataFairer}, is to use data intermediaries -- trusted third parties that enable demographic evaluation, validation, or secure linkage under privacy-preserving access frameworks without disclosing protected-class information to insurers. Such mechanisms offer a structurally different route to fairness assessment and, in many cases, a more transparent and statistically reliable alternative to unstructured reliance on proxy inference.

Third, our results point to a complementary line of research that seeks to improve proxy-based race inference itself. Our Experiment 2 in Section~\ref{sec:empirical} shows that BIFSG misclassification correlates strongly with ZIP-level racial composition and socioeconomic status, indicating that proxy errors are structured rather than random. This empirical pattern aligns with a growing methodological literature that enhances BISG and related approaches by augmenting them with additional contextual or socioeconomic features and applying machine-learning methods to reduce systematic misclassification. Recent examples include \citet{argyle2024misclassification}, who train random-forest models using BISG outputs and socioeconomic covariates to correct correlated proxy errors, and \citet{kwegyir2024observing}, who develop contextual proxy models (cBISG and MICSG) that incorporate outcome-specific context and supervised learning to achieve calibrated disparity estimates. Other work focuses on improving proxy calibration directly; \citet{greengard2023improved}, for example, correct BISG’s minority underclassification through raking-based adjustments. A related thread treats proxy race as a measurement-error problem and derives partial-identification bounds for disparity estimation \citep[e.g.,][]{kallus2022assessing}.

\section{Conclusion} \label{sec:conclusion}

This paper examines a setting that is becoming increasingly common in regulatory and industry practice but remains poorly understood in the academic literature: regression-based fairness assessments conducted without access to observed protected-class information and instead relying on proxy-imputed race. Our analysis shows that proxy race is not merely a noisy substitute for true race. When used as a categorical regressor, proxy-imputed race fundamentally alters the behavior of regression-based disparity estimates, producing coefficients that reflect weighted mixtures of group effects driven by the structure of misclassification. This mixing mechanism systematically shrinks estimated disparities toward the majority group--even when overall classification accuracy is high.

Our empirical analysis illustrates how these mechanisms operate in practice. Using a linked North Carolina voter–insurance dataset, we show that BIFSG misclassification is highly structured, varying systematically with ZIP-level racial composition and socioeconomic conditions. These structured errors interact with regression residuals in ways that can either attenuate or amplify conditional disparity estimates. Together, the theoretical and empirical results demonstrate that proxy-based regression is not a neutral substitute for analyses based on observed protected-class labels.

These findings have important implications for quantitative fairness testing in regulatory and governance settings. In emerging insurance frameworks such as Colorado’s SB21-169 draft testing framework and New York’s Circular Letter No. 7, proxy-based inference is often viewed as a practical workaround for the absence of protected-class data. Our results show that, without careful attention to the structure of misclassification and its interaction with model residuals, proxy-based audits risk either masking meaningful disparities--creating false reassurance through attenuated estimates--or generating spurious signals of unfairness driven by residual correlations rather than underlying discrimination. Accordingly, proxy-based testing should be interpreted as a diagnostic tool rather than a definitive measure of discrimination. 

Several directions for future research follow naturally from this work. First, our empirical findings suggest scope for audit methods that explicitly model or correct correlated proxy errors arising from geographic and socioeconomic structure. Second, complementary institutional approaches--such as trusted third-party audits, privacy-preserving data linkage, or partial-identification frameworks--may offer alternatives to unstructured reliance on proxy inference. Finally, extending the analysis to other protected attributes and application domains would help clarify the generality of the mechanisms documented here.

\bibliographystyle{apacite}
\bibliography{reference}

\newpage

\appendix
\section{Regulatory Context: Proxy-Based Fairness Testing in Insurance}
\label{app:regcontext}

This appendix summarizes the regulatory developments that motivate proxy-based race inference and regression-based disparity testing in insurance.

\subsection{Colorado SB21-169 and Draft Quantitative Testing Framework}

In July 2021, Colorado enacted Senate Bill 21-169 (SB21-169), a landmark insurance reform aimed at strengthening consumer protections against unfair discrimination in insurance practices. Colorado’s SB21-169 prohibits insurers from using external consumer data and information sources (ECDIS) and related algorithms or predictive models in any insurance practice in a manner that unfairly discriminates on the basis of protected classes, and holds insurers accountable for ``testing their big data systems'' with an expectation of corrective action where consumer harms are identified. In September 2023, the Colorado Division of Insurance released a draft quantitative testing regulation for life insurance underwriting pursuant to SB21-169. The draft proposes a regression-based framework under which insurers would be required to conduct annual quantitative testing to assess whether the use of ECDIS, or models relying on ECDIS, results in unfairly discriminatory underwriting or pricing outcomes based on the race or ethnicity of proposed insureds.

Under the 2023 draft, the testing procedure proceeds in several steps: (1) identify specified underwriting outcomes of interest, including underwriting decisions and premium outcomes; (2) infer the proposed insureds’ race or ethnicity using the Bayesian Improved First Name Surname Geocoding (BIFSG) method \citep{voicu2018using} based on individuals’ names and geolocation information; (3) estimate regression models in which the underwriting outcome is regressed on indicators for inferred race or ethnicity, adjusting for a specified set of control variables (policy type, face amount, age, gender, and tobacco use). A statistically significant disparity (p-value $<$ 0.05) that exceeds predefined magnitude thresholds for approval or premium outcomes triggers further variable-level testing to identify contributing ECDIS inputs. In contrast to the fair-lending context -- where proxy race measures were typically deployed by regulators in ex post regulatory investigations -- Colorado’s framework moves toward an ex ante, self-audit paradigm in which insurers are expected to implement internal governance and ongoing quantitative testing processes. However, industry stakeholders raised concerns that the draft’s highly prescriptive structure limited flexibility in the specification of regression-based testing procedures\footnote{Subsequent to the Colorado Division of Insurance’s September 2023 draft, industry stakeholders continued to engage with the Division through a formal stakeholder process. In June 2024, the American Council of Life Insurers (ACLI) presented an industry-proposed alternative draft quantitative testing regulation. The ACLI proposal differs in several material respects from the 2023 Division draft. In particular, it relaxes the prescriptive structure of the original testing framework by allowing greater flexibility in the choice of control variables and outcomes of interest, permits the use of BIFSG, BISG, or other approved inference methods, and reframes the regression analysis to condition explicitly on model outputs. This shift reflects a change in testing intention: whereas the 2023 Division draft emphasizes detecting outcome-level disparities associated with inferred race or ethnicity, the ACLI proposal focuses more narrowly on identifying residual disparities not already embodied in model outputs.}.

\subsection{New York Insurance Circular Letter No.\ 7}


In July 2024, the New York Department of Financial Services (NYDFS) issued Insurance Circular Letter No. 7, setting supervisory expectations for insurers’ use of ECDIS and artificial intelligence systems (AIS) in underwriting and pricing. In particular, NYDFS encourages insurers to conduct “quantitative assessments” of whether underwriting or pricing outcomes exhibit disproportionate adverse effects for similarly situated insureds or for insureds of a protected class, while clarifying that the Department does not expect insurers to collect additional protected-class data solely for this purpose. Although the Circular Letter does not mandate any specific statistical methodology, it explicitly contemplates that protected-class status may be determined using data available to the insurer or “reasonably inferred using accepted statistical methodologies.” This language creates a regulatory setting in which proxy-based approaches to race or ethnicity inference -- such as BIFSG -- may serve as practical tools for operationalizing quantitative testing when direct race information is unavailable. In contrast to Colorado’s more prescriptive draft framework, New York’s approach places greater discretion on insurers in how quantitative assessments are conducted, while holding insurers accountable for identifying and mitigating unfairly discriminatory outcomes. 

\subsection{Broader Policy Motivation and Demographic Data Constraints}

Beyond specific state-level insurance frameworks, a broader policy motivation underlies the increasing attention to proxy-based fairness testing: regulators and policymakers are seeking ways to assess unfair discrimination in settings where protected-class data are unavailable or legally constrained. In this context, the National Association of Insurance Commissioners (NAIC, \citeyear{NAIC2023AIModelBulletin}) adopted a model bulletin on insurers’ use of AI systems that emphasizes the need for insurers to implement appropriate testing and validation procedures to identify errors, bias, and the potential for unfair discrimination in the decisions and outcomes arising from predictive models and AI systems.

The challenges in accessing demographic data for fairness assessment also appear beyond the U.S. insurance or fair-lending setting. For example, UK government policy work on AI fairness \citep{UKGov2023DemographicDataFairer} identifies the lack of reliable demographic information as a major barrier to detecting and mitigating bias in AI systems, and explicitly considers proxies as a potential solution when direct collection of protected attributes (including race and sex) is infeasible. Similar constraints have also been encountered in private-sector settings, such as Airbnb’s Project Lighthouse, which relies on proxy-based and privacy-preserving approaches to assess potential discrimination without direct collection of protected attributes \citep{Airbnb2020Lighthouse}.

\section{Datasets Used}
\label{app:datasets-used}


\resizebox{1.1\textwidth}{!}{
\begin{lstlisting}
'data.frame':	1851858 obs. of  13 variables:
 $ zip_code       : Factor w/ 100 levels "27012","27028",..: 1 1 1 1 1 1 1 1 1 1 ...
 $ surname        : chr  "HILTIBIDAL" "FRANKLIN" "MILLER" "BEESON" ...
 $ first          : chr  "JULIA" "DENNY" "CARSON" "DEBORAH" ...
 $ race_code      : Factor w/ 8 levels "A","B","I","M",..: 8 8 8 8 8 8 8 8 8 8 ...
 $ ethnic_code    : Factor w/ 3 levels "HL","NL","UN": 2 2 3 3 2 2 2 3 2 2 ...
 $ gender_code    : Factor w/ 2 levels "F","M": 1 2 2 1 1 2 1 1 1 2 ...
 $ age_at_year_end: int  46 45 21 71 64 56 68 22 70 73 ...
 $ drivers_lic    : Factor w/ 1 level "Y": 1 1 1 1 1 1 1 1 1 1 ...
 $ age_range      : Factor w/ 18 levels "18","19","20",..: 12 12 4 17 15 14 16 5 17 17 ...
 $ average_premium: num  31 31 36 32 30 31 31 35 32 32 ...
 $ MEDFAMINC      : num  121863 121863 121863 121863 121863 ...
 $ PPOV           : num  0.0801 0.0801 0.0801 0.0801 0.0801 ...
 $ PUNEMP         : num  0.0601 0.0601 0.0601 0.0601 0.0601 ...
\end{lstlisting}
}

\subsection{Data Sources and Sample Construction}
\label{app:sample-construction}

The empirical dataset used in this study is constructed by merging three data sources: 
(i) individual-level voter registration records from the North Carolina State Board of Elections; 
(ii) ZIP-code–level auto insurance premiums obtained via web scraping from \textit{CarInsurance.com}; and 
(iii) ZIP Code Tabulation Area (ZCTA)–level socioeconomic indicators from the National Neighborhood Data Archive (NaNDA).

We begin with the North Carolina voter file and restrict the analysis to individuals residing in the first 100 ZIP codes in the state ranked by voter population. From the original voter data, we apply the following sample restrictions: 
(1) we retain only individuals located in the selected ZIP codes; 
(2) we remove observations for which both \texttt{race\_code} and \texttt{ethnic\_code} are recorded as “U” (undesignated), or for which \texttt{gender\_code} is “U”; and 
(3) we retain only individuals who hold a valid driver’s license.

Insurance premiums are then assigned to each individual by matching the average premium corresponding to their ZIP code $\times$ age group $\times$ gender cell. The premium data report single-year ages from 16 to 24 and, for older individuals, selected representative ages (25, 30, 35, \dots, 75). For individuals above age 24, voter ages are grouped into five-year intervals (25–29, 30–34, \dots, 70–74) and assigned the premium corresponding to the lower bound of each interval. Individuals older than 75 are excluded because premiums are not reported beyond that age. Since the voter file includes only individuals aged 18 or older, premium values for ages 16–17 are not used in the analysis. Therefore, these premiums are based on standardized driver profiles and vary across individuals only through demographic and geographic characteristics. We subsequently merge ZIP-level socioeconomic indicators from NaNDA using the individual’s residential ZIP code.

For empirical analysis, racial and ethnic information from the voter file is consolidated into mutually exclusive categories. Individuals with ethnicity code HL (Hispanic or Latino) are classified as Hispanic regardless of reported race. Among non-Hispanic individuals, race is grouped into Asian (A), Black (B), White (W), and Others (all remaining categories).

The final merged dataset contains 1,851,858 individual observations. 

\subsection{Variables Used in the Analysis}
\label{app:variables-used}

Table~\ref{tab:variables_ncvoter} lists the variables used in the empirical analysis reported in Section~\ref{sec:empirical}.

\begin{table}[H]
\centering
\begin{tabular}{lp{0.75\textwidth}}
\toprule
\textbf{Variable Name} & \textbf{Description} \\
\midrule
\texttt{zip\_code} & Residential ZIP code of the individual. 
Only the first 100 ZIP codes (ranked by voter population) are retained in the analysis. \\
\texttt{surname} &  Individual last name, used for proxy-based race inference. \\
\texttt{first} &  Individual first name, used for proxy-based race inference.\\
\texttt{race\_code} &  Self-reported race from the NC voter file. Categories include:
A (Asian),
B (Black or African American),
I (American Indian or Alaska Native),
M (Two or more races),
O (Other),
P (Native Hawaiian or Pacific Islander),
U (Undesignated),
W (White).\\
\texttt{ethnic\_code} &  Self-reported ethnicity from the NC voter file. Categories include: 
HL (Hispanic or Latino), 
NL (Not Hispanic or Not Latino), 
UN (Undesignated).\\
\texttt{gender\_code} & Gender/sex recorded in the voter file (F = Female, M = Male). 
Observations coded as U (Undesignated) are excluded from the analysis. \\
\texttt{age\_at\_year\_end} &  Age of the individual at the end of the calendar year.\\
\texttt{drivers\_lic} &  Indicator for holding a driver’s license (Y/N). 
Only individuals with a valid driver’s license (Y) are retained in the analytical sample. \\
\texttt{age\_range} &  Constructed categorical age variable with 18 effective age groups: 
18, 19, 20, 21, 22, 23, 24, 
25–29, 30–34, 35–39, 40–44, 45–49, 
50–54, 55–59, 60–64, 65–69, 70–74, and 75. 
Because the voter file includes only individuals aged 18 or older, younger ages do not appear in the data. Observations above age 75 are excluded for analysis. \\
\texttt{average\_premium} & Assigned auto insurance premium for each individual, determined by the ZIP code $\times$ age group $\times$ gender cell. \\
\texttt{MEDFAMINC} & Median family income at the ZIP Code Tabulation Area (ZCTA) level (2018–2022 ACS five-year estimates). \\
\texttt{PPOV} &  Proportion of individuals with income below the federal poverty level in the past 12 months at the ZCTA level (2018–2022 ACS five-year estimates).\\
\texttt{PUNEMP} &  Proportion of the civilian labor force (age 16+) unemployed at the ZCTA level (2018–2022 ACS five-year estimates). \\
\bottomrule
\end{tabular}
\caption{Variables Used in the Analysis}
\label{tab:variables_ncvoter}
\end{table}

\section{Connection to Classical Continuous Measurement Error}
\label{app:continuous-me}

This section contrasts our categorical-misclassification decomposition with the classical linear regression setting where covariates are continuous and subject to additive measurement error. Consider the standard model
\[
y = X\beta + \epsilon,\qquad \tilde X = X + E,
\]
where $E$ is a continuous-valued error matrix satisfying $\mathbb{E}[E\mid X]=0$ and (classically) $E \perp X$. The OLS estimator using $\tilde X$ is
\[
\tilde\beta = (\tilde X^\top \tilde X)^{-1}\tilde X^\top y
= (\tilde X^\top \tilde X)^{-1}\tilde X^\top (X\beta+\epsilon)
= (\tilde X^\top \tilde X)^{-1}\tilde X^\top X\,\beta
+ (\tilde X^\top \tilde X)^{-1}\tilde X^\top\epsilon.
\]
Taking expectations and using $\mathbb{E}[\epsilon\mid X,E]=0$,
\[
\mathbb{E}[\tilde\beta\mid X]
=
\mathbb{E}\!\left[(\tilde X^\top \tilde X)^{-1}\tilde X^\top X\mid X\right]\beta.
\]
Under the classical assumptions $E\perp X$ and $\mathbb{E}[E\mid X]=0$, we have
\[
\mathbb{E}[\tilde X^\top X\mid X] = X^\top X,
\qquad
\mathbb{E}[\tilde X^\top \tilde X\mid X] = X^\top X+\Sigma,
\]
where $\Sigma=\mathbb{E}[E^\top E]$. Applying a ratio-of-expectations approximation then yields
\[
\mathbb{E}[\tilde\beta] \approx (X^\top X+\Sigma)^{-1}X^\top X\,\beta
= \Big(I-(X^\top X+\Sigma)^{-1}\Sigma\Big)\beta,
\]
so the attenuation bias is
\[
\beta-\mathbb{E}[\tilde\beta] \approx (X^\top X+\Sigma)^{-1}\Sigma\,\beta,
\]
which corresponds to a ridge-type shrinkage intuition.

In contrast, for categorical misclassification, $E$ is structurally dependent on $X$. Under one-hot encoding with $E=\tilde X-X$, each row of $X$ and $\tilde X$ contains exactly one 1, and any misclassification produces a $-1$ entry at the true category and a $+1$ entry at the predicted category. In particular, $E_{ij}=-1$ can occur only when $X_{ij}=1$, so $E$ cannot be independent of $X$. Therefore, the classical independence assumptions and the resulting ridge-type attenuation approximation need not apply. This motivates the exact decomposition in Definition~\ref{def:decomposition} and the confusion-matrix representation developed in Section~\ref{sec:errors-categorical}.
\end{document}